\begin{document}
\theoremstyle{plain}
\newtheorem{thm}{Theorem}
\newtheorem{remark}{Remark}
\newtheorem{lemma}{Lemma}
\newtheorem{prop}[thm]{Proposition}
\newtheorem*{cor}{Corollary}
\theoremstyle{definition}
\newtheorem{defn}{Definition}
\newtheorem{condi}{Condition}
\newtheorem{assump}{Assumption}

\title{Hierarchical Game-Based Multi-Agent Decision-Making for Autonomous Vehicles }
\author{Mushuang Liu, Yan Wan, Frank Lewis, Subramanya Nageshrao, H. Eric Tseng and Dimitar Filev

\thanks{M. Liu and Y. Wan are with the Department of  Electrical Engineering, University of Texas at Arlington, Arlington, TX, USA.
Email: mushuang.liu@mavs.uta.edu, yan.wan@uta.edu.
}
\thanks{F. L. Lewis is with UTA Research Institute, University of Texas at Arlington, Fort Worth, TX. Email: lewis@uta.edu.}

\thanks{S. Nageshrao is with Ford Greenfield Labs, 3251 Hillview Ave, Palo Alto, CA 94304, USA (e-mail: snageshr@ford.com).}
\thanks{H. E. Tseng and D. Filev are with Ford Research and Innovation Center, 2101 Village Road, Dearborn, MI 48124, USA (e-mail: htseng@ford.com,  dfilev@ford.com).}
\thanks{This work is supported by Ford Contract Fast Autonomous Driving Decision based on Learning and Rule-based Cognitive Information and ARO Grant W911NF-20-1-0132.}
}
\maketitle
\begin{abstract}
     This paper develops a game-theoretic decision-making framework for autonomous driving in multi-agent scenarios. 
A novel hierarchical game -based decision framework is developed for the ego vehicle. This  framework  features an interaction graph, which  characterizes the interaction relationships between the ego and its surrounding traffic agents (including AVs, human driven vehicles, pedestrians, and bicycles, and others), and enables the ego to smartly select a limited number of agents as its game players. Compared to the standard multi-player games, where all surrounding agents are considered as  game players, the hierarchical game significantly reduces the computational complexity. In addition, compared to pairwise games,  the most popular approach in the literature, the hierarchical game promises more efficient decisions for the ego (in terms of less  unnecessary  waiting and yielding). 
To further reduce the computational cost, we then propose an improved hierarchical game, which  decomposes the 
hierarchical game
 into a set of sub-games. Decision safety and efficiency are analyzed in both hierarchical games. Comprehensive simulation studies are conducted to verify the effectiveness of the proposed frameworks, with an intersection-crossing scenario as a case study.
\end{abstract}
\section{Introduction}
Autonomous vehicles (AVs) are expected to produce significant societal
benefits, such as   reduced
accidents, reduced energy consumption,  enhanced ride comfort, improved travel efficiency, and improved equity for children, elderly and disables \cite{benefit_1,benefit_2}. One of the biggest technical challenges in autonomous driving is the design of decision-making algorithms to generate  safe and efficient decisions for AVs in diverse  traffic scenarios \cite{decision1,decision2,decision3}. Such designs 
for the ego vehicle must take into consideration the interactions between the ego  and  other traffic agents, including AVs, human driven vehicles, pedestrians, and bicycles, and others \cite{interaction_1,interaction_2,interaction_3}. These interactions can be heterogeneous (i.e., agents affect the ego in different ways), time-varying (i.e., the interactions change with the motion of agents), and highly correlated  (i.e., the interaction between one agent pair may affect the interactions between other agent pairs).   

The game-theoretic approaches have  recently been developed as the mathematical framework to capture the interactions among traffic agents for the decision-making of AVs \cite{game_merge,game_leader,game_sta,game_racing,game_potential,game_1,game_2,Victor,my_ford}. 
 In a game theoretic formulation, each agent aims to optimize its own payoff/interest, which is mutually affected by other agents' actions. Of our interest, paper \cite{game_merge} formulates a two-player non-zero-sum game to model the merging-giveaway interaction between a through vehicle and a merging vehicle.  To capture the vehicles' interactions in lane-changing scenarios, a two-player Stackelberg game and a two-player normal-form game were developed, respectively, in papers \cite{game_sta,Victor}.
  To address the intersection-crossing problems in uncontrolled intersections,  a leader-follower game was designed in paper \cite{game_leader}. To model autonomous racing between two vehicles, paper \cite{game_racing} investigated and compared two non-cooperative games including the Nash equilibrium game and the Stackelberg game.   To deal with the coordination of vehicles on highways, a mixed-integer potential game was developed in paper \cite{game_potential}.  Although the above game-theoretic approaches provide us promising solutions in modeling the interactions among agents, they are computationally intensive with the increase of the number of agents. As such, most of the existing game-theoretic studies focus on the two-agent scenarios only.

To generalize the two-agent games to multi-agent scenarios, pairwise games have been recently developed \cite{Victor,game_sta}. In pairwise games, the ego plays a two-player game with each of its surrounding agents and selects the most conservative decision from the game outcomes as its final decision. Although pairwise game is computationally scalable, it ignores the  interaction correlations between different agent pairs. To be more specific, pairwise games assume that the ego is affected by each of its surrounding agents independently. However, this is usually not true  because the surrounding agents also have interactions among themselves, and they often act as a coherent whole to the ego. In pairwise games,    the "local" interaction between the ego and one surrounding agent is captured, while the "global" interaction among multiple agents is ignored. 

To take into account multi-agent interactions while remaining the tractability of the solution, we propose a novel hierarchical game framework in this paper. This framework features a game player selection step based on an interaction graph that captures  the interactions among agents. With this hierarchical game, the ego is able to select a subset of agents as game players such that the number of players are limited while the important interaction information is also captured. Compared to pairwise games, the proposed hierarchical game is able to generate  more efficient decisions for the ego (in terms of less unnecessary waiting or yielding), while guaranteeing the decision safety. 

This paper is organized as follows.  Section \ref{II} briefly reviews a two-agent game-theoretic framework developed in \cite{my_ford}, which serves as a basis in this paper. Section  \ref{III} introduces the novel hierarchical game that generalizes from the two-agent scenarios to multi-agent scenarios. An improved hierarchical game approach is also proposed in this section to further reduce the computational complexity.  Section \ref{IV} designs game payoffs in the  proposed games. Section \ref{V} shows simulation studies, and Section \ref{VI} concludes the paper.

\section{Preliminaries and Review of Two-Agent Game-Theoretic Decision-Making Framework}\label{II}
In this section, we first briefly review our prior work on the  game-theoretic decision-making framework for two agents \cite{my_ford}. This framework serves as a basis for the development in this paper. 
The normal-form games are also reviewed in this section to facilitate the analysis in this paper.
\subsection{Decision-making framework in two-agent scenarios}
A three-level decision-making framework (shown in Figure \ref{Three_level}) was proposed in \cite{my_ford} to generate  safe and efficient decisions for the ego when it interacts with  one other agent. Here is a brief description of   each level. The first level is an action space filter, which determines the action space of the ego and filters out the actions that violate "hard" traffic rules, e.g., driving through a red traffic light. The second level is a normal-form game based decision-making, where the Nash solution is generated. 
The third level is a final safety check to prevent an unsafe action from being implemented. To better illustrate the role of each level,  let us consider the intersection-crossing scenario as a case study.

\begin{figure}[thpb]
\centering
\includegraphics[width=0.25\textwidth]{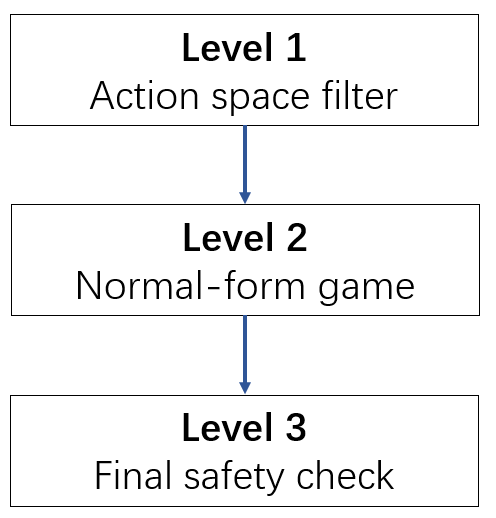}
\caption{Three-level decision-making framework }\label{Three_level}
\end{figure}

Consider a four-way stop sign intersection shown in Figure \ref{scenario_1}. The ego is labeled with the number "1" and aims to go straight. Meanwhile, the vehicle 2 is also approaching the intersection and aims to cross it. The action space for the two vehicles is $A_i=\{\text{yield, go}\}$ for $i=1,2$. To generate appropriate decisions, the ego needs to consider 1) the rules associated with the stop sign, and 2) the interactions with vehicle 2. Using the three-level framework in Figure \ref{Three_level}, the ego determines its action space in the first level by considering the hard rule that all vehicles must come to a complete stop before they cross the stop sign. In the second level, the ego plays a two-player normal-form game  with the vehicle 2 (as shown in Figure \ref{game_intersection}), and selects the Nash decision. The game payoffs $a_{ij}$ and $b_{ij}$ ($i,j\in \{1,2\}$) are expected to capture both the first-come-first-go rule and the safety requirement related to the vehicle 2's states and possible actions. A detailed design procedure can be found in \cite[Section IV]{my_ford}. The third level checks whether the Nash solution is safe. If it is not safe, then a backup plan of safe actions ("yield" in this case) is implemented.  

\begin{figure}[thpb]
\centering
\includegraphics[width=0.3\textwidth]{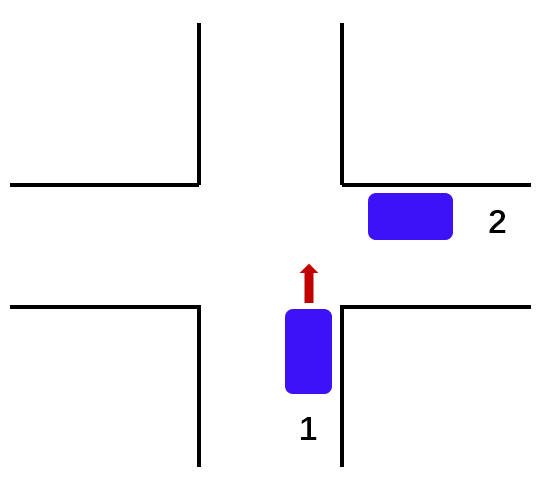}
\caption{The intersection-crossing scenario }\label{scenario_1}
\end{figure}

\begin{figure}[thpb]
\centering
\includegraphics[width=0.31\textwidth]{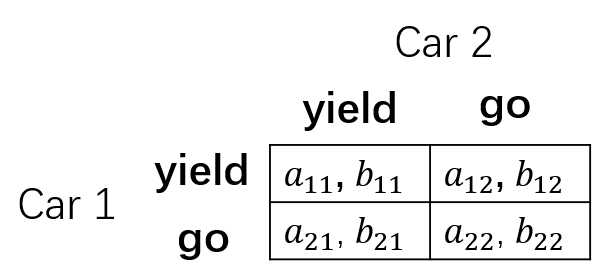}
\caption{Normal-form game of intersection-crossing}\label{game_intersection}
\end{figure}

By performing the above decision-making process periodically for every few milliseconds, the ego is able to make timely decisions subject to  changes in environments (e.g., vehicle 2's change of actions). The effectiveness of this framework has been verified in two-agent scenarios \cite{my_ford}. The focus of this paper is to  generalize this framework to multi-agent scenarios. As the generalization of the first and third levels are straightforward, we focus our study on the  the second level, i.e., the normal-form game -based decision-making.  Before we present our solution, let us briefly review the  multi-player normal-form game in the next subsection. 

\subsection{Review of normal-form games}\label{sec:pre}
 A multi-player normal-form game is defined as a tuple ($\mathcal{N},A,J$), where  $\mathcal{N}=\{1,2,\cdots,N\}$ is the set of $N$ players, $A=A_1\times A_2 \times \cdots \times A_N$ is the set of actions with $A_i$ the set of  player $i$'s actions, and $J=J_1\times J_2 \times \cdots \times J_N$, with $J_i:A\rightarrow \mathbb{R}$ the action-dependent payoff functions of player $i$ \cite{normal_game}. In a normal-form game, each player aims to maximize its payoff by selecting the optimal action that takes into account other players'  payoffs and possible actions.

\begin{figure}[thpb]
\centering
\includegraphics[width=0.35\textwidth]{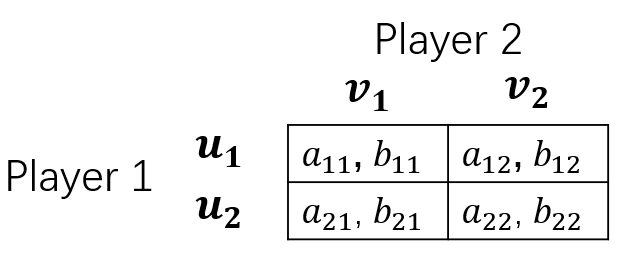}
\caption{Two-player normal-form game }\label{normal}
\end{figure}

A payoff table, e.g., Figure \ref{normal}, is usually employed to represent a normal-form game.  In the two-player game shown in Figure \ref{normal}, each player has two possible actions: $A_1=\{u_1,u_2\}$ and $A_2=\{v_1,v_2\}$. The payoffs for the two players are denoted by $a_{ij}$ for player 1  and $b_{ij}$ for player 2, respectively, where $i,j\in\{1,2\}$ represent the associated actions. To be more specific, the payoff pair $(a_{11}, b_{11})$ represents that if player $1$ selects action $u_1$ and player $2$ selects $v_1$, then the two players receive the payoff $a_{11}$ and $b_{11}$, respectively.   

To consider normal-form games with enlarged action spaces, additional rows and columns should be added in the payoff table. To consider multiple players in the  normal-form game (i.e., $N>2$),  the dimension of the table is required to be extended from $2$ to $N$.

Next let us define two of the most important concepts in normal-form games:  "best response" and "Nash equilibrium". 
\begin{defn} [Best Response]
The best response of player $i$ is defined as the action $s_i^*$ of player $i$ such that 
\begin{equation}
    J_i(s_i^*,s_{-i})\geq J_i(s_i,s_{-i})
\end{equation}
for all possible $ s_i\in A_i$, where $s_{-i}$ is the action set performed by all other agents, i.e.,  $s_{-i}=\{s_1,\cdots,s_{i-1},s_{i+1},\cdots, s_N\}$.  
\end{defn}
\begin{defn} [Nash equilibrium]
An $N$-tuple of actions $\{s_1^*,s_2^*,...,s_N^*\}$ is said to be a Nash equilibrium for an $N$-player game if for all $i=1,2,\cdots,N$,  
\begin{equation}
    J_i(s_i^*,s_{-i}^*)\geq J_i(s_i,s_{-i}^*)
\end{equation}
holds for all possible $ s_i\in A_i$.
That is, if all players perform their best responses against each other, then a Nash equilibrium is achieved.

Nash equilibrium is important as it represents a stable status of a game in the sense that no player is able to get a better payoff by unilaterally changing its decision. In other words, if a Nash equilibrium is achieved in a game, then no player would have the incentive to  change its decision. 
\end{defn}

\section{Hierarchical Game -Based Multi-Vehicle Decision-Making}\label{III}
In this section, we develop a novel approach  to generalize the two-agent decision-making framework described in Section \ref{II} to multi-agent scenarios. 
 We first introduce two existing solutions developed in the literature in Section \ref{IIIA}, from which our novel solution, called hierarchical game, is motivated and developed in Section \ref{IIIB}. Properties of the  hierarchical game are described in Section \ref{IIIC}. An improved hierarchical game  is  presented in Section \ref{IIID} to  further reduce the computational complexity. 

\subsection{Existing solutions}\label{IIIA}
Two   approaches have been widely used  in the literature to solve multi-agent game-theoretic decision-making: 1) multi-player game with all surrounding agents as game players, and 2) pairwise game.  

In a multi-player game where all agents are counted as game players, the global interaction among agents is captured. This formulation, in principle, can lead to  optimal decisions. However, as the computational complexity increases  exponentially with the number of agents, this formulation is not practical in solving real-time decision-making problems in practice \cite{complexity}.  

To provide a computationally affordable solution, pairwise game has been recently developed in the literature  \cite{game_sta,game_leader,Victor}. In pairwise game design, the ego plays a two-player game with each of its surrounding agents and selects the most conservative decision as its final decision \cite{Victor}. This solution is computationally scalable with safety  guarantees. 
However, because only a pair of agents are considered in each game, the correlations among agent pairs are ignored. In other words,  the global  interaction among agents cannot be captured by pairwise games. Therefore, the  solution from pairwise games can be excessively conservative and may lead to the freezing vehicle symptom, i.e., all AVs decide to yield to each other.

To overcome the above challenges, we introduce a novel hierarchical game approach, which  balances the solution optimality and computational complexity to generate safe and efficient decisions with affordable computational cost.

\subsection{Hierarchical game design}\label{IIIB}
 The intuition behind the hierarchical game is as follows. We notice that the surrounding agents are not equally important to the ego  to make its decision, and only a limited number of agents are critical.  To  illustrate this perspective, let us consider the intersection-crossing scenario pictured in Figure \ref{multi_example}. The ego (labeled with the number "1")  aims to go straight and cross the intersection. Meanwhile, agents 2-6 are also going to cross the intersection. To avoid crashing, the ego has to keep monitoring the behaviors of pedestrian 2 and vehicle 3 (colored with green), as their planned trajectories conflict with the ego's future trajectory. On the other hand,  agent 3's behaviors are dependent on the behaviors of agents 4 and 5 (colored with blue), as the future trajectory of agent 3 conflicts with the trajectories of both agents 4 and 5. As such, taking into consider the behaviors of agents 4 and 5 also benefits the ego's decision-making. 
 In contrast to these  agents, the vehicle 6 and other agents colored in yellow are not very relevant to the ego's decision-making, as their decisions hardly affect the ego's decision. This example motivates that characterizing the interactions between the ego and its surrounding agents is important, as it helps the ego to select the most "relevant" agents to focus on.  To characterize such interaction relationships, let us first define the following three concepts, called \textit{trajectory conflicts}, \textit{ego's direct neighbors}, and \textit{ego's $k^{th}$ level of neighbors}.

\begin{figure}[thpb]
\centering
\includegraphics[width=0.28\textwidth]{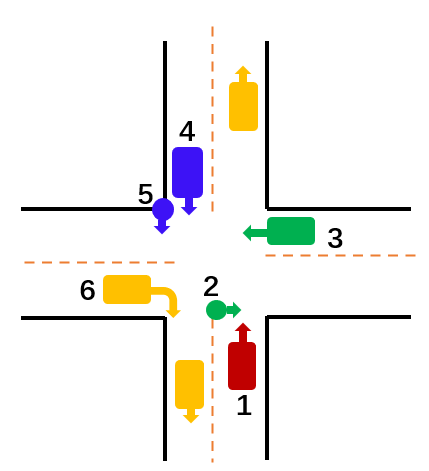}
\caption{An intersection-crossing example with multiple agents }\label{multi_example}
\end{figure}

\begin{defn} [Trajectory conflicts]\label{d3}
Consider an intersection-crossing scenario, agents $i$ and $j$  are said to have trajectory conflicts if their future trajectories  intersect with each other within the intersection region.
\end{defn}
\begin{defn} [Ego's direct neighbors]\label{d4}
The ego's direct neighbors (or called the $1^{{st}}$ level of neighbors)  are defined as the agents that have trajectory conflicts with the ego.
\end{defn}
\begin{defn} [Ego's $k^{{th}}$ level of neighbors]\label{d5}
The ego's $k^{{th}}$ level of neighbor are defined as the agents that have trajectory conflicts with the ego’s $(k-1)^{{th}}$ level of neighbors.
\end{defn}

\begin{figure}[thpb]
\centering
\includegraphics[width=0.22\textwidth]{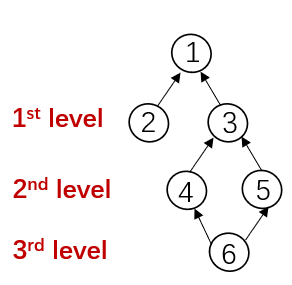}
\caption{Interaction graph for agents in Figure \ref{multi_example} }\label{interaction_graph}
\end{figure}
With Definitions \ref{d3}-\ref{d5}, we can compose an interaction graph to describe the agent interactions for the scenario in Figure \ref{multi_example}. The interaction graph is shown in Figure \ref{interaction_graph}, where the nodes represent the agents and the arrows represent the direct dependence: the ego's decision is directly decided by agents 2 and 3, and so on. For this specific example, the ego's surrounding agents can be grouped into three levels. The first level includes agents 2 and 3, which have trajectory conflicts with the ego. The second level includes agents 4 and 5, both of which directly affect the agent 3's decisions. The third level is the agent 6, which determines the decisions of  agents 4 and 5. This interaction graph delineates the importance of each agent to the ego. The agents in lower levels (e.g., agents 2 and 3) have stronger interactions with the ego compared to the ones in higher levels (e.g., agent 6). As such, considering the constraint of limited computational resource, the agents in lower levels should have higher priority to be included in the game player set.  In other words, if it is not possible to include every agent as the game players, then the agents in higher levels should be dropped, as they present weak interactions with the ego and are less significant in affecting the ego's decision. 

The general interaction graph for  an intersection-crossing scenarios is pictured  in Figure \ref{general}, from which we develop our hierarchical game framework. In a hierarchical game, the ego is designed to play a multi-player game with a subset of its surrounding agents, called the first $k^{\text{th}}$ level of neighbors. The selection of the number $k$ depends on the ego's computational capability. To be more specific, if the ego is expected to generate a decision  every $\Delta t$, and its processor allows at most a $N_{max}$-player game within $\Delta t$ computational time, then the number of $k$ should be selected as the maximum value that satisfies the following condition:
The total number of agents in the first $k^{\text{th}}$  levels are less than or equal to $N_{max}$. 
For example, consider the interaction graph in Figure \ref{interaction_graph}. If we expect the ego to repeat the decision-making process periodically  every 10 milliseconds, during which the processor is able to solve a $5$-player game at most (i.e., $N_{max}=5$), then $k$ should be selected as $2$ because a total of $5$ agents are included in the first two levels. On the other hand, if $N_{max}>6$, then we can select $k=3$ to derive a global Nash solution where all agents are included in this case.     

\begin{figure}[thpb]
\centering
\includegraphics[width=0.35\textwidth]{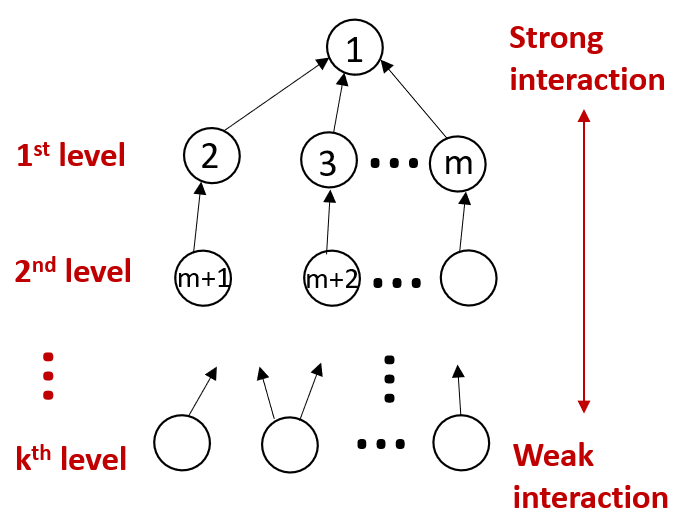}
\caption{General interaction graph }\label{general}
\end{figure}

In addition to characterizing the interaction relationships, we notice that the  clustering of agents is also important in determining the interaction graph to reduce the number of game players. 
Consider the  intersection-crossing scenario shown in Figure \ref{clustering_example}. Let the vehicle labeled with the number "1" be the ego. Clearly, all pedestrians labeled with green circles have trajectory conflicts with the ego, and as a result, all of them should be included in the ego's first level neighbors. However, it is not be necessary to include all of them in the ego's game player set, as their interactions with the ego are very similar (i.e., they all aim to cross the sidewalk in front of the ego). 
Motivated by this,  we propose a simple and effective agent clustering criterion in Definition \ref{d6} to group the agents that have similar interactions with the ego. It is expected that one clustered agent group can be regarded as one game player, and if the ego is safe to this game player, then it is safe to all agents in this group. 

\begin{figure}[thpb]
\centering
\includegraphics[width=0.42\textwidth]{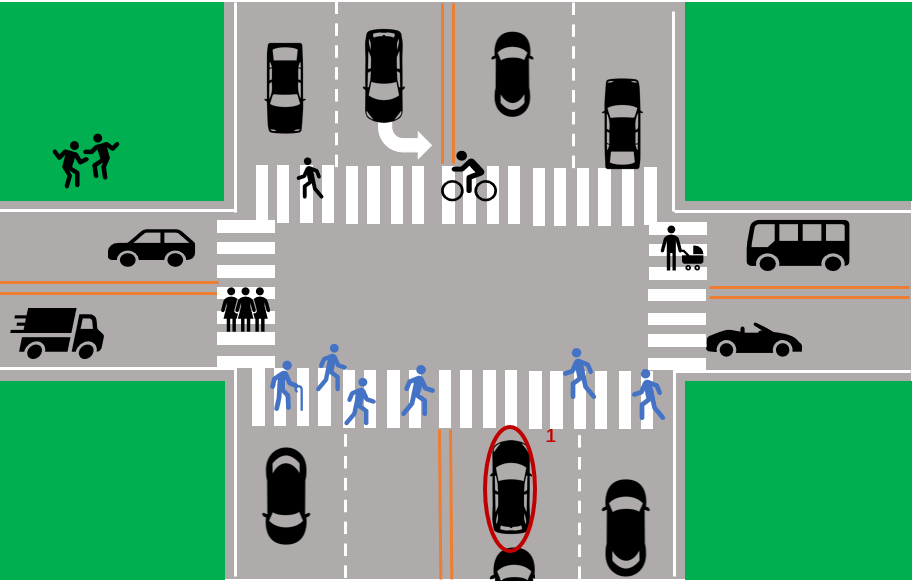}
\caption{Illustrative example to motivate agent clustering }\label{clustering_example}
\end{figure}

\begin{defn} [Agent clustering]\label{d6}
We cluster the agents with parallel trajectories within the intersection region into one group. 
\end{defn}

This agent clustering criterion is simple to implement as one only needs to check the parallelism of two trajectories. Although simple, this approach is effective in the clustering and reducing the number of game players. For example, consider the scenario shown in Figure \ref{clustering}. With the criterion defined in Definition \ref{d6}, all agents colored in blue are clustered into one group, which dramatically reduces the number of the first level neighbors for the ego. It is also reasonable because it meets our expectation that  the ego is safe to all of them as long as it is safe to the "most conflicting" one. 

Note that definition \ref{d6} provides one possible approach to  cluster agents. 
Other agent clustering methods designed based on  trajectory similarity measures (e.g., \cite{clustering_1,clustering_2}) can also be incorporated into our hierarchical game framework. 
\begin{figure}[thpb]
\centering
\includegraphics[width=0.42\textwidth]{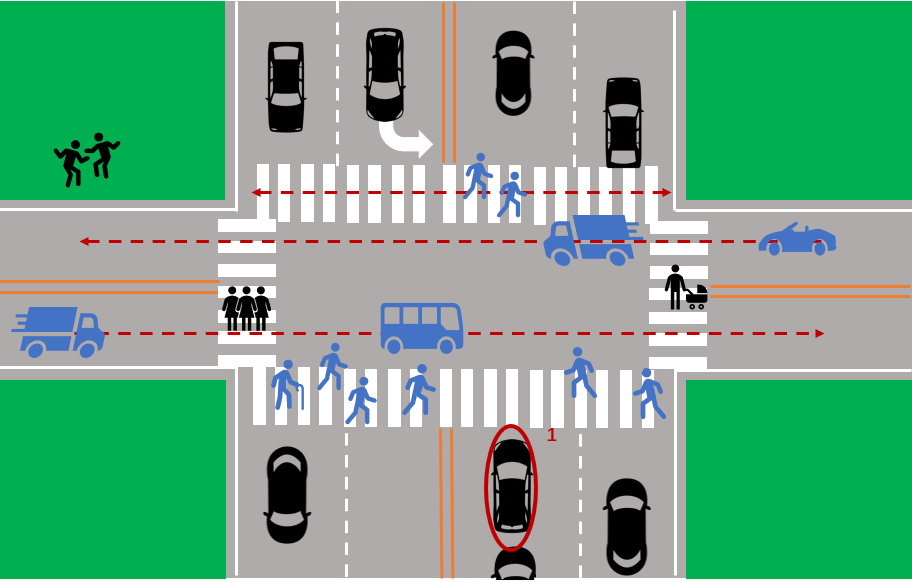}
\caption{Illustrative example to show the effectiveness of agent clustering }\label{clustering}
\end{figure}

Next, let us consider the selection of the "most conflicting" agent from a clustered group, such that the clustered agent group can be represented by the selected agent. 
To measure the "conflicting level", let us first define time-to-collision (TTC).  
\begin{defn} [Time-to-collision]\label{d7}
The time-to-collision  $T^c_{ij}$ is defined as the time that agent $i$ takes to reach the relative position of the agent $j$.
\end{defn}
Consider the vehicles $1$ and $2$ in Figure \ref{TTC_fig}. $T^c_{12}$ (and $T^c_{21}$, respectively) is defined as the time that the vehicle 1 (vehicle 2) takes to reach the relative position of vehicle 2 (vehicle 1), i.e.,  
\begin{equation}\label{eq:ttc}
    T^c_{12}=\frac{d_{12}}{v_1},
\end{equation}
\begin{equation}
    T^c_{21}=\frac{d_{21}}{v_2},
\end{equation}
where $d_{12}$ and $d_{21}$ are, respectively, the longitudinal and latitudinal distances between vehicles 1 and 2 along  vehicle 1's heading and vehicle 2' heading, and  $v_1$ and $v_2$ are the speed of vehicle 1 and 2, respectively.
\begin{figure}[thpb]
\centering
\includegraphics[width=0.3\textwidth]{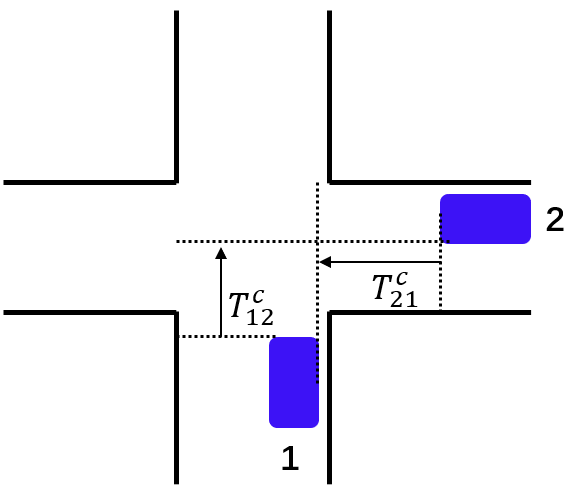}
\caption{Time-to-collisions}\label{TTC_fig}
\end{figure}

With the defined TTC, we propose the following representative agent selection criterion. 

\begin{defn} [Representative agent selection]\label{d8}
An agent $j$ in the agent $i$'s first level of neighbors is selected as a representative agent of a clustered group if $j=\text{argmin}_j|T^c_{ij}-T^c_{ji}|$ holds for all $j$ in this group.
\end{defn}

Definition \ref{d8} uses the difference between two TTCs, i.e., $|T^c_{ij}-T^c_{ji}|$, to measure the conflicting level between the agents $i$ and $j$. $|T^c_{ij}-T^c_{ji}|=0$ represents that the two agents reach the same location at the same time, i.e., a crash happens. On the other hand,  $|T^c_{ij}-T^c_{ji}|$ being large means that the two agents are not likely to crash into each other. As such, the agent $j$ that leads to the minimum $|T^c_{ij}-T^c_{ji}|$ is  the "most conflicting" one to the agent $i$. 

Note that both agent clustering and representative agent selection are procedures in determining the interaction graph. After deriving the interaction graph, the ego can then perform  the hierarchical game.  The detailed steps of performing a  hierarchical game are described in Algorithm \ref{a1}. 

\begin{algorithm}[t]
\caption{Hierarchical Game  in Intersection-Crossing} \label{a1}
\hspace*{0.0in} {\bf Input:} \\ 
\hspace*{0.02in}Total number of agents: $N$;\\ 
\hspace*{0.02in}Desired time period for each play: $\Delta t$;\\ 
\hspace*{0.02in}Maximum number of game players that  can be afforded for each $\Delta t$: $N_{max}$;\\ 
\hspace*{0.02in}Positions, velocities, and future trajectories of all agents in the intersection.  \\
\hspace*{0.02in} {\bf Output:} \\
\hspace*{0.02in} The ego's decision: \textit{Go} or \textit{Yield}.\\
\hspace*{0.02in} {\bf Procedures:} 
\begin{algorithmic}[1]
\State {\bf For} $t=\Delta T,2 \Delta T,\cdots$ {\bf do}
\State \hspace*{0.2in} Determine the interaction graph according to steps 3-8.
\State \hspace*{0.2in}  Cluster agents with parallel trajectories.
\State \hspace*{0.2in} Set $j=1$ and $\mathcal{N}_0=\{1\}$.
\State \hspace*{0.2in}{\bf While} $\mathcal{N}_{j-1}\neq \emptyset$ {\bf do}
\State \hspace*{0.2in} Determine the ego's $j^{th}$ level of neighbors (denoted by $\mathcal{N}_{j}$) according to the agents' trajectories and representative agent selection. If no agent belongs to $\mathcal{N}_{j}$, then set $\mathcal{N}_{j}= \emptyset$.
\State \hspace*{0.2in} Set $j=j+1$.
\State \hspace*{0.2in}{\bf End while}
\State \hspace*{0.2in} Select $k$ according to the derived interaction graph, $\Delta t$, and $N_{max}$. 
\State \hspace*{0.2in}Play a multi-player normal game with the first $k^{th}$ level of agents and select the Nash solution as the ego's decision.
\State {\bf End for}
\end{algorithmic}
\end{algorithm}

\subsection{Properties of hierarchical game} \label{IIIC}
The developed hierarchical game provides a trade-off between the solution optimality and computational complexity. The properties of the derived solution, in terms of safety and efficiency,  are described respectively in the following two propositions.

\begin{prop}
 Consider a hierarchical game from Algorithm \ref{a1}. If  no agent clustering is implemented and all levels of neighbors are included as game players, then the hierarchical game generates the global Nash solution.
\end{prop}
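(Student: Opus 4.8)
The plan is to show that, when no clustering is used and every level of neighbors is included, the game-player set of the hierarchical game is exactly the connected component of the ego in the interaction graph, and that the payoffs of agents lying in distinct components are mutually independent. From these two facts I will conclude that the full $N$-player game factorizes into independent subgames, one per connected component, so that solving the ego's subgame (which is precisely the hierarchical game in this regime) returns the same equilibrium action for the ego as solving the full game. In the special case where the component of the ego already contains all $N$ agents, the game-player set equals $\mathcal{N}$, the payoff table is identical to that of the full multi-player game, and the claim is immediate; the substance of the proof is in handling agents the ego cannot reach through trajectory conflicts.

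First I would make precise what ``all levels of neighbors'' collects. By Definitions \ref{d4}--\ref{d5}, the $j^{th}$ level $\mathcal{N}_j$ consists of the agents having trajectory conflicts with some agent in $\mathcal{N}_{j-1}$, starting from $\mathcal{N}_0=\{1\}$. Iterating the \textbf{while} loop of Algorithm \ref{a1} until $\mathcal{N}_{j-1}=\emptyset$ therefore performs a breadth-first exploration of the interaction graph from the ego, so the union $\bigcup_{j\ge 0}\mathcal{N}_j$ is exactly the connected component containing the ego. Denote this player set by $\mathcal{C}$ and the remaining agents by $\mathcal{C}^{c}$. Because no clustering is applied, each physical agent is its own node and no agents are merged or discarded, so $\mathcal{C}$ captures every agent reachable from the ego through a chain of trajectory conflicts and nothing else.

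Next I would establish the decoupling of payoffs across $\mathcal{C}$ and $\mathcal{C}^{c}$. The payoffs of this framework encode the safety and first-come-first-go terms that are triggered only by agents whose trajectories conflict; by construction an agent in $\mathcal{C}^{c}$ has no trajectory conflict with any agent in $\mathcal{C}$. Hence for every $i\in\mathcal{C}$ the payoff $J_i$ depends only on the actions $\{s_\ell\}_{\ell\in\mathcal{C}}$, and symmetrically each $J_i$ with $i\in\mathcal{C}^{c}$ depends only on actions within $\mathcal{C}^{c}$. Consequently the best-response inequality for any player in $\mathcal{C}$ involves only the actions of the other players in $\mathcal{C}$, and likewise for $\mathcal{C}^{c}$. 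I would then argue that an action profile satisfies the Nash-equilibrium condition of the full game if and only if its restriction to $\mathcal{C}$ and its restriction to $\mathcal{C}^{c}$ are each Nash equilibria of the corresponding subgames, since the defining inequalities decouple block by block. Because the hierarchical game with all levels included is exactly the subgame on $\mathcal{C}$, its Nash solution coincides with the $\mathcal{C}$-restriction of a global Nash solution, and in particular the ego's equilibrium action is identical in the two games.

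The main obstacle I anticipate is justifying the payoff-decoupling step rigorously rather than by appeal to intuition: it rests on the payoff construction specified in Section \ref{IV}, and one must verify that no term of $J_i$ couples an agent to a non-conflicting agent, so that the dependency structure of the payoffs matches the edge structure of the interaction graph exactly. A secondary subtlety is the possible non-uniqueness of Nash equilibria: the factorization yields a correspondence between global equilibria and tuples of per-component equilibria rather than a single identity, so I would phrase the conclusion at the level of this correspondence, observing that whatever deterministic selection rule Algorithm \ref{a1} applies within $\mathcal{C}$ fixes the ego's decision independently of the irrelevant component $\mathcal{C}^{c}$.
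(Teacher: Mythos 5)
Your proposal is correct, and it takes a genuinely more careful route than the paper. The paper's proof is a two-sentence identification: with no clustering and every level of neighbors included, the hierarchical game simply \emph{is} the $N$-player game with all agents as players, hence its Nash solution is the global Nash by definition. That identification is literally valid only when every agent is reachable from the ego through a chain of trajectory conflicts, i.e., when the interaction graph is connected --- an assumption the paper leaves implicit, even though its own motivating example (Figure \ref{multi_example}, with interaction graph in Figure \ref{interaction_graph}) contains agents that belong to no level of the ego's neighbors. Your proof covers exactly this residual case: you identify $\bigcup_{j}\mathcal{N}_j$ with the ego's connected component $\mathcal{C}$, observe that the payoff construction of Section \ref{IV} (Equations \eqref{safety_1}, \eqref{safety_2}, \eqref{rule_1_e}, \eqref{rule_2_e}) couples a player's payoff only to the actions of agents within graph distance two of that player --- so no term links $\mathcal{C}$ to $\mathcal{C}^{c}$ --- and conclude that the full game factorizes into independent subgames whose block-wise Nash equilibria are exactly the global Nash equilibria; the ego's equilibrium action is therefore the same whether one solves the component subgame (the hierarchical game in this regime) or the full $N$-player game. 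What your argument buys is rigor and strictly greater generality, including the equilibrium-multiplicity caveat you correctly flag as a correspondence rather than an identity; what the paper's argument buys is brevity, at the price of conflating ``all levels of neighbors'' with ``all $N$ agents.''
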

\begin{proof}
If no agent clustering is implemented and all levels of neighbors are included, then the  hierarchical game is equivalent to an $N$-player game where all agents are included in the game player set. As all agents are included, the global interactions are captured, and as such, the global Nash is derived. 
\end{proof}
\begin{prop}
Consider a hierarchical game from Algorithm \ref{a1}. Assume that the game payoffs are well-designed such that the Nash decision is a safe decision  for the game players considering the interactions captured by the game. Then even though only a subset of surrounding agents are included in the game player set, the ego's decision is still guaranteed to be safe. 
\end{prop}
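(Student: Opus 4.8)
The plan is to prove safety against \emph{every} surrounding agent by accounting separately for the agents that the game omits. Assuming the computational budget retains at least the ego's direct neighbors (i.e.\ $k\geq 1$), I would first argue that an agent excluded from the game player set must be of exactly one of two types: (a) an agent that has no trajectory conflict with the ego, or (b) a non-representative member of one of the ego's $1^{st}$-level clusters. Indeed, by Definition~\ref{d4} every agent conflicting with the ego is a $1^{st}$-level neighbor, and since $k\geq1$ all $1^{st}$-level clusters are represented among the game players; hence the only conflicting agents that can be dropped are the non-representative cluster members of type (b), while all remaining dropped agents (those in levels $>k$, and so on) are necessarily of type~(a). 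Safety against the retained game players holds by hypothesis, so it remains to handle (a) and (b).

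For type (a) the argument is immediate. By Definition~\ref{d3} an agent with no trajectory conflict has a future trajectory that does not intersect the ego's inside the intersection region, so no action of that agent can collide with the ego. Hence the ego's \emph{Go}/\emph{Yield} choice is safe against every type-(a) agent unconditionally, independent of the game outcome.

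For type (b) I would invoke Definitions~\ref{d6} and~\ref{d8}. Every member of a $1^{st}$-level cluster shares a parallel trajectory, and the representative $j^\ast$ is the member minimizing the conflict measure over the cluster, so that $|T^c_{ij}-T^c_{ji}|\geq|T^c_{ij^\ast}-T^c_{j^\ast i}|$ for every member $j$. Reading $|T^c_{ij}-T^c_{ji}|$ as the ego's temporal clearance at the shared conflict region, this inequality says the ego's clearance against any $j$ is no smaller than its clearance against $j^\ast$. Consequently a \emph{Yield} decision trivially avoids all members (the ego halts), while a \emph{Go} decision that clears $j^\ast$ with a nonnegative margin clears every $j$ with a margin at least as large. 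Because Algorithm~\ref{a1} recomputes the clusters and representatives at every play (its inner While-loop, lines~5--8), this domination is re-established at each decision instant as the leading member of the cluster changes, so safety against all type-(b) agents is maintained over time. Combining the three cases, the ego is safe against all surrounding agents.

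The step I expect to be the main obstacle is making the type-(b) domination rigorous. Identifying a small $|T^c_{ij}-T^c_{ji}|$ with the "most dangerous" member is only a proxy for collision risk, so to turn it into a genuine ordering of safety margins I would need an explicit definition of what a \emph{safe} \emph{Go} decision means (a minimum-gap or minimum-time-separation condition) and then a monotonicity argument, valid under the parallel-trajectory geometry of Definition~\ref{d6}, showing that this margin is nondecreasing in $|T^c_{ij}-T^c_{ji}|$. By contrast, the type-(a) case is immediate from the definitions, and the reduction of the excluded set to types (a) and (b) is purely combinatorial.
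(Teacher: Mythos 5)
Your proposal is correct, and its core is the same as the paper's: since $k\geq 1$, the first-level neighbors are always among the game players, and by Definition~\ref{d4} they are the only surrounding agents whose trajectories conflict with the ego's, so the payoff hypothesis (safety against the game players) plus the harmlessness of non-conflicting agents gives safety against everyone. That observation is, in fact, the \emph{entirety} of the paper's proof. Where you go beyond the paper is your type-(b) case: you noticed that Algorithm~\ref{a1} clusters parallel-trajectory agents (Definition~\ref{d6}) and retains only the representative of Definition~\ref{d8}, so some first-level, genuinely conflicting agents are \emph{not} game players, and safety against them does not follow from the stated hypothesis alone. The paper's proof silently skips this; the only support in the text is the informal remark after Definition~\ref{d6} that it is ``expected'' that safety against the representative implies safety against the whole group, which is an assumption rather than a proved fact. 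Your domination argument --- the representative minimizes $|T^c_{ij}-T^c_{ji}|$ over the cluster, so a \emph{Go} that clears the representative clears every member --- is the right shape for closing this hole, and your own caveat is accurate: since a small $|T^c_{ij}-T^c_{ji}|$ is a proxy for collision risk rather than a safety margin, the step stays heuristic until one fixes a formal definition of a safe \emph{Go} (a minimum time-separation condition) and proves the monotonicity lemma under the parallel-trajectory geometry. In short, your proof is strictly more careful than the paper's: it isolates exactly the gap the paper leaves open, and makes clear that the proposition as stated needs either the cluster-safety property added as an explicit assumption or the monotonicity lemma you describe.
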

\begin{proof}
According to the hierarchical game, the ego includes its first $k^{\text{th}}$ ($k\geq1$)  level of neighbors as its game players. As the first level neighbors are the only agents (among all surrounding agents) that have trajectory conflicts with the ego, the ego is safe as long as the interactions with the first level neighbors are captured. As $k\geq1$ always holds,  the safety of the ego is always guaranteed.  
\end{proof}
 We can conclude from the above properties that, in general, the more levels of neighbors are included in the game, the closer to global optimal  of the derived solution is (i.e., the more efficient of the decision is).   However, the better solution  is derived at the cost of more computational resources. 

In addition, it is  worthy  commenting  that  the hierarchical game design is consistent with  humans' decision-making process. When we drive to cross intersections with heavy traffic, we do not pay equal attention to every agent around us. Instead, we usually keep monitoring  the agents who are the "most conflicting" to us (i.e., the ones that may lead to  accidents with us). In addition, if we can handle, we would also pay attention to the "relevant" agents,  who directly affect the behaviors  of the "most conflicting" agents. In contrast, we hardly pay attention to the agents that have nothing to do with our planned trajectories.   As such, from this point of view, the hierarchical game  naturally mimics the human-driving decision-making process. 


 Next, let us consider a comparison between the hierarchical game and the two existing approaches: $N$-player game and pairwise game. Compared to the $N$-player game, the hierarchical game is able to reduce the computational complexity from $2^N$ to $2^{N_h}$, where $N_h$ is the number of game players in the hierarchical game. In complex scenarios, $N_h$ can be much less than $N$. For example, consider the intersection-crossing scenario in Figure \ref{comparison_multi}. If the ego considers all surrounding agents as game players, a $26$-player game needs to be solved, which can be impossible in a limited time duration. On the other hand, a hierarchical game with the first two levels of neighbors leads to a $4$-player game (the interaction graph is shown in Figure \ref{interaction_comparison}), and the generated decisions can be both safe and efficient.   
 
 Compared to the pairwise game, the hierarchical game provides more efficient decisions, as more agent interactions are captured. To illustrate this point, let us consider the  scenario pictured in Figure \ref{pairwise_example}, where two vehicles and one pedestrian are involved.  The future trajectories of the three agents are plotted in dashed lines. Consider the vehicle 1 as the ego. According to the hierarchical game, the vehicle 2 and pedestrian 3 are the ego's first and second level neighbors, respectively.  By playing a 3-player game with the two agents, the ego should be motivated to go at this moment because its first level agent, i.e., vehicle 2, is waiting for pedestrian 3 and is not going to move for the next few seconds. On the other hand, if the ego plays pairwise games, then the decision of "yield" would be selected.  It is because that the game with the vehicle  2 generates the ego a "yield" decision, as agent 2 comes to the intersection earlier and has higher road priority. As we can see from this example,  pairwise games may lead to  inefficient decisions for the ego, because they only capture local interactions and neglect the correlations between interactions. 

\begin{figure}[thpb]
\centering
\subfigure[]{\label{comparison_multi}
\includegraphics[width=0.42\textwidth]{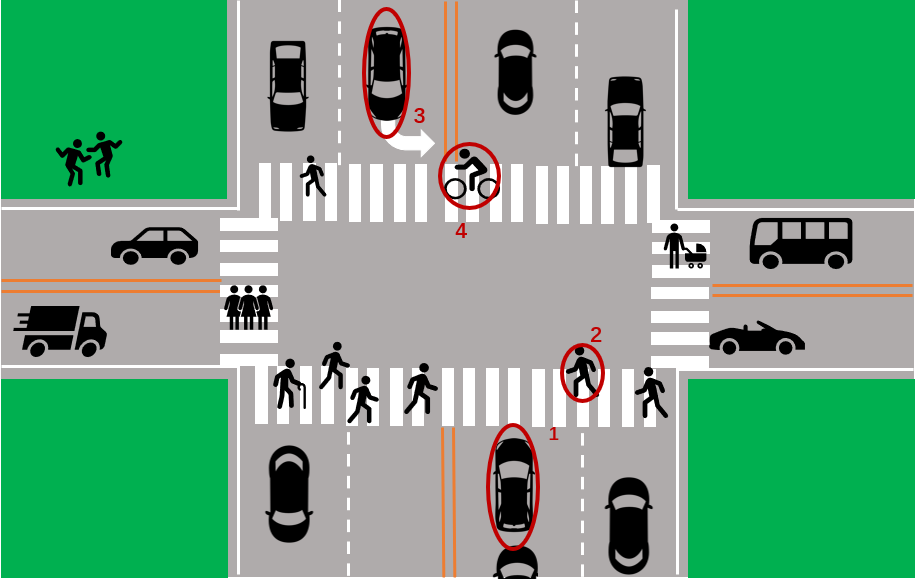}}
\subfigure[]{\label{interaction_comparison}
\includegraphics[width=0.24\textwidth]{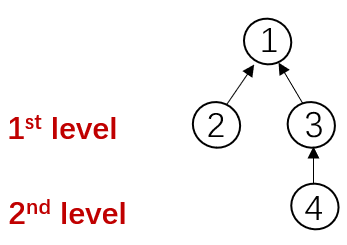}}
\caption{(a) Illustrative example for the comparison between hierarchical game and $N$-player game, and (b) Corresponding interaction graph}
\end{figure}

 \begin{figure}[thpb]
\centering
\includegraphics[width=0.25\textwidth]{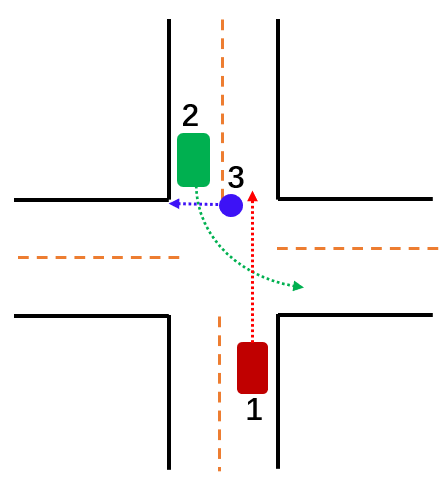}
\caption{Illustrative example for the comparison between hierarchical game and pairwise games.}\label{pairwise_example}
\end{figure}


\subsection{Improved hierarchical game design} \label{IIID}
In this subsection, we propose an improved hierarchical game to further reduce the computational complexity of a $N_h$-player hierarchical game.  In the improved hierarchical game, we decompose the  $N_h$-player game into a set of $N_j$-player games, where $j=1,2,\cdots,M$, $M$ is the number of sub-games, and $N_j$ is the number of game players in the $j^{th}$ sub-game. 

To illustrate the design of the improved hierarchical game, let us first define a new concept called \textit{branch} in an interaction graph.  The decomposition of the  $N_h$-player game is performed by checking whether or not two branches share common nodes. 
\begin{defn} [Branch]\label{branch}
A branch in an interaction graph is defined as an interaction tree where a first level agent serves as the root node and all agents that have a directed path to the root are included in the branch. The number of branches in an interaction graph equals the number of first level agents.
\end{defn}

 The intuition behind the game decomposition is described as follows.  Given the fact that the ego is safe if and only if it is safe to all agents in the first level, it is reasonable to consider the first level agents separately in different sub-games, if they do not have interactions with each other. To be more specific, let us consider  
 the interaction graph in Figure \ref{interactoin_threesub}, where three branches are involved. Branch 1 is  $\{2,5\}$, Brunch 2 is $\{3,6\}$, and Branch 3 is $\{4\}$. In addition, these three branches do not have interactions  except  that they all affect the ego's decision. As such, it is reasonable for the ego to consider these three branches separately in three sub-games, and then select the most conservative decision to ensure that it is safe to all of its first level neighbors.   In other words, instead of playing a  $6$-player, in the improved hierarchical game, the ego is designed to play three sub-games: two $3$-player games with players $\{1,2,5\}$ and $\{1,3,6\}$, respectively, and a $2$-player game with players $\{1,4\}$. In contrast, if two branches have interactions with each other, then the two branches cannot be decomposed. Consider the interaction graph in Figure \ref{interaction_twosub}, where the last two branches have interactions: they are both affected by agent 6. In this case, the four agents in the two branches, i.e., agents 1, 3, 4, and 6, need to be included in one game together to capture the global interactions.  As a result, the $6$-player game in this case is decomposed into two sub-games: a $3$-player game with the player set $\{1,2,5\}$ and a $4$-player game with the player set $\{1,3,4,6\}$.  The general rules for the decomposition of a $N_h$-player game  is given in Definition \ref{decomposition}.

\begin{defn} [Interaction graph decomposition]\label{decomposition}
If  two branches in an interaction graph do not share common agents,  then they can be considered  in two separate sub-games.  
\end{defn}
\begin{figure}[thpb]
\centering
\subfigure[]{\label{interactoin_threesub}
\includegraphics[width=0.22\textwidth]{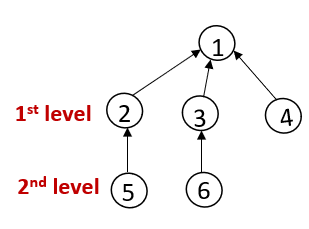}}
\subfigure[]{\label{interaction_twosub}
\includegraphics[width=0.24\textwidth]{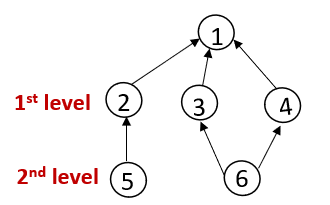}}
\caption{Examples of interaction graphs with (a) three branches and (b) two branches}\label{graph}
\end{figure}

With the above improved hierarchical game design, the computational complexity is further reduced from $2^{N_h}$ to $\sum_{j=1}^{M}2^{N_j}$ with guaranteed safety, where $N_j\leq N_h$. 

\section{Payoff Design}\label{IV}
 After the selection of game players,  the ego needs to determine its payoff matrix to perform a game. The payoffs are desired to capture both the safety requirement and soft traffic rules, as illustrated in \cite{my_ford}. 
 In this section, we describe the design of payoff matrix in an $N_j$-player normal-form game, including the safety-based payoff in Section \ref{IVA} and the  rule-based payoff in Section \ref{IVB}. 
 
 Denote the action-dependent payoff function for agent $i$ as $J_i(a_1,a_2,\cdots,a_{N_j})$, where $i=1,2,\cdots,{N_j}$ and $a_i$ is the  action of agent $i$.  Consider a stop-sign intersection-crossing scenario, then the action space for each agent is $A_{i}=\{go, yield\}$ for all $i=1,2,\cdots,{N_j}$. The payoffs $J_i(a_1,a_2,\cdots,a_{N_j})$ are designed as   
\begin{equation}\label{aij}
\begin{split}
    J_i(a_1,a_2,\cdots,a_{N_j})=&\beta J_i^s(a_1,a_2,\cdots,a_{N_j})\\
    &+(1-\beta )J_i^r(a_1,a_2,\cdots,a_{N_j}),
    \end{split}
\end{equation}
where $J_i^s(a_1,a_2,\cdots,a_{N_j})$ is  the safety-based payoff to measure how safe an action vector is for agent $i$, $J_i^r(a_1,a_2,\cdots,a_{N_j})$ is the rule-based payoff to capture the reward gained by obeying the traffic rule, and $\beta$ is a weighting parameter to tune the ego's behavior and satisfies $0\leq\beta \leq1$. 
Next let us  design  the two  payoffs respectively.
\subsection{Safety-based payoff}\label{IVA}
 We use time-to-collision defined in Definition 7 and time-of-safe-crossing (TOSC) defined in Definition \ref{TOSC} to design the safety-based payoff. 

\begin{defn} [Time-of-safe-crossing]\label{TOSC}
The time-of-safe-crossing  $T^s_{i}$ is defined as the time that agent $i$ takes to completely cross the intersection.
\end{defn}
Consider the vehicles $1$ and $2$ in Figure \ref{TTC}. $T^s_{1}$ ($T^s_{2}$) is defined as the time that the vehicle 1 (vehicle 2) takes to completely cross the intersection. 
\begin{equation}
    T^s_{1}=\frac{d_{1}}{v_1},
\end{equation}
\begin{equation}
    T^s_{2}=\frac{d_{2}}{v_2},
\end{equation}
where  $d_{1}$ is the longitudinal distance between the vehicle 1 and the end of the intersection along the direction of vehicle 1's heading, and $d_{2}$ is the latitudinal distance  between  vehicle $2$ and the end of the intersection along the vehicle 2's heading.

\begin{figure}[thpb]
\centering
\includegraphics[width=0.3\textwidth]{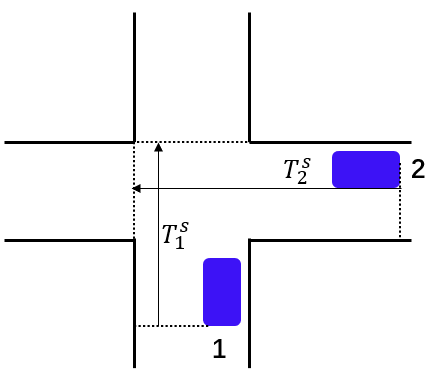}
\caption{Definition of time-of-safe-crossing}\label{TTC}
\end{figure}

To design the safety-based payoff for the ego (i.e., $J_1^s(a_1,a_2,\cdots,a_{N_j})$), let us  consider a two-level $N_j$-player game. We assume, without loss of generality, that agents $\{2,3,\cdots,m\}$ are in the first level and agents $\{m+1,m+2,\cdots,N_j\}$ are in the second level. As only the first level agents have trajectory conflicts with the ego, we design the ego's safety-based payoff as follows. 
\begin{equation}\label{safety_1}
    J_1^s(Y,a_2,...,a_N)=\theta_1(T_{1}^s-\theta_2T_{21}^c)+\cdots+\theta_1(T_{1}^s-\theta_2T_{m1}^c),
\end{equation}
\begin{equation}\label{safety_2}
    J_1^s(G,a_2,...a_N)=\theta_3(T_{21}^c-\theta_4T_{1}^s)+\cdots+\theta_3(T_{m1}^c-\theta_4T_{1}^s),
\end{equation}
where "Y" and "G" represent the actions of  "yield" and "go", respectively, and $\theta_1,\theta_2,\theta_3,$ and $\theta_4$ are constant parameters to tune the ego's behaviors. 
Equation \eqref{safety_1} represents that the ego is motivated to yield as long as there exists at least one agent in the first level such that $T^s_1$ is much larger than $T^c_{21}$, i.e., the ego is not safe to cross the intersection. Equation \eqref{safety_2} represents that the ego is motivated to go if for all $k\in\{2,\cdots,m\}$, $T_{k1}^c$ is much larger than $T^s_1$, i.e., the ego is safe to all first level neighbors.

In addition, for an agent $k$ in the first level, i.e., $k\in\{2,\cdots,m\}$, if at least one of the agent $k$'s direct neighbors  select to go, then the term $\theta_3(T_{k1}^c-\theta_4T_{1}^s)$ should be replaced by $\theta_3(T_{k1}^c-\theta_4T_{1}^s+R)$, where $R$ is a reward with a positive value. This reward is to capture the interaction between the ego and its second level neighbors. If agent $k$'s direct neighbor(s) select to go, then agent $k$ is more likely to yield, and as a result, the safety-check condition for the ego, i.e., $\theta_3(T_{k1}^c-\theta_4T_{1}^s)$, should be easier to satisfy, which is realized by the addition of a positive reward, $R$.

The above payoff design is developed for the ego, i.e., $J_1^s(a_1,a_2,...a_N)$. Similar design principals are also applied to other agents in the intersection.

\subsection{Rule-based payoff}\label{IVB}
In this subsection, we design the rule-based payoff for the ego, i.e., $J_1^r(a_1,a_2,...a_N)$. Consider an all-way stop sign intersection, then the soft rule to be considered is first-come-first-go. To capture this rule, we define the relative arriving order $r_{ik}$ as our decision variable.

\begin{align}\label{rij}
        &r_{ik}=\begin{cases}1, &\mbox{vehicle $i$ arrives earlier than vehicle $k$},\vspace{1ex}\\
0, &\mbox{vehicle $k$ arrives earlier than vehicle $i$},\vspace{1ex}\end{cases}
\end{align}

With the relative arriving order $r_{ik}$, we can define the first-come-first-go -based payoff for the ego $J_1^r(a_1,a_2,...a_N)$ as
\begin{equation}\label{rule_1_e}
    J_1^r(Y,a_2,...a_N)=0.5,
\end{equation}
\begin{equation}\label{rule_2_e}
    J_1^r(G,a_2,...a_N)=\prod_{k=2}^mr_{ik}.
\end{equation}

Comparing Equations \eqref{rule_1_e} and \eqref{rule_2_e}, we can see that  the ego is motivated to go only if $r_{ik}=1$ for all $k\in \{2,...,m\}$, i.e., the ego has the highest priority compared to all of its first level neighbors. Similar payoff designs are also applied to other vehicles in the intersection.

Substituting the safety-based and rule-based payoffs into Equation (5), the ego's payoff designs are then completed. The constant parameters in payoffs, i.e., $\beta$, $\theta_1$, $\theta_2$, $\theta_3$, $\theta_4$, can be learned from offline supervised learning algorithms developed in  \cite{my_ford}. 

\section{Simulation Studies}\label{V}
In this section, we conduct  simulation studies to verify the effectiveness of the developed hierarchical game in multi-vehicle intersection-crossing scenarios. 

The vehicles' dynamics is described using the kinematic bicycle model \cite{bicycle}.
\begin{equation}
\begin{split}
    \Dot{x}_i&=v_{i}cos(\phi_i+\beta_i),\\
    \Dot{y}_i&=v_{i}sin(\phi_i+\beta_i),\\
    \Dot{\phi}_{i}&=\frac{v}{l_r}sin(\beta_i),\\
    \Dot{v}_i&=a_{i},\\
    \beta_i&=tan^{-1}\left(\frac{l_r}{l_r+l_f}tan(\delta_i)\right),
    \end{split}
\end{equation}
where $i=1,\cdots,N$ represents the agents, $x_i$ and $y_i$ are the longitudinal and lateral
position of the center of mass of  vehicle $i$ along $x$ and $y$ axes, respectively, $v_{i}$ and  $a_{i}$  are  the velocity and acceleration of the center of mass of  vehicle $i$, respectively, $\beta$ is  the angle of the velocity with respect to the longitudinal axis of the vehicle. $\phi_i$ is the inertial
heading. $l_f$ and $l_r$ are the lengths from the center of mass to the front and rear ends of the car, respectively, and are selected as $l_f = l_r = 1.5m$. The inputs of the system are the acceleration $a_i$ and the front steering
angle $\delta_i$. 

In the following simulation studies, we first simulate the cases where three vehicle get involved, to test the functionality of the proposed game. Then we simulate the cases where many vehicles  ($6-10$) are involved, to test the ego's performance in dense traffic scenarios. 

\subsection{Three-vehicle scenarios}
In three-vehicle scenarios, we conduct the following three studies: 1) the test of the functionality of each payoff component; 2) the test of the interaction between the ego and its second level neighbor; and 3) the comparison of the ego's performances in the  hierarchical game and pairwise games.

\textbf{Study 1: Functionality of each payoff component.}
In this study, we simulate the ego's performance when 1) only the rule-based payoff is included, and 2) both safety and rule -based payoffs are included. The rule-based payoff captures the first-come-first-go rule. Figures \ref{rule_1} and  \ref{rule_2} show, respectively, the ego's decision if it comes to the intersection earlier, and later, respectively, than vehicle 2. The red arrow represents the decision of "go" and the red circle represents "yield". As we can see from the figures, the ego's decision is safe and efficient if the ego's first level neighbor (i.e., vehicle 2) obeys the traffic rule. We also simulate the case where vehicle 2 does not obey the traffic rule, as shown in Figure \ref{rule_3}. In this case,  the ego comes first but vehicle $2$ does not yield to the ego. As a result, the ego crashes into the vehicle 2. 
The above simulations show that the rule-based payoff alone is not sufficient for the ego to always generate safe decisions. 

Then we complete the payoff design with both rule-based and safety-based components, and test the scenario shown in Figure \ref{rule_3} again. In this case, due to the safety-based payoff, the ego is able to successfully stop before it crashes into the vehicle 2, even though the vehicle 2 violates the traffic rule. As we can see from this simulation, with both safety-based and rule-based payoff components, the ego is able to not only obey traffic rules but also generate safe decisions subject to other agents' violation of rules. 

\begin{figure}[thpb]
\centering
\subfigure[]{\label{rule_1}
\includegraphics[width=0.23\textwidth]{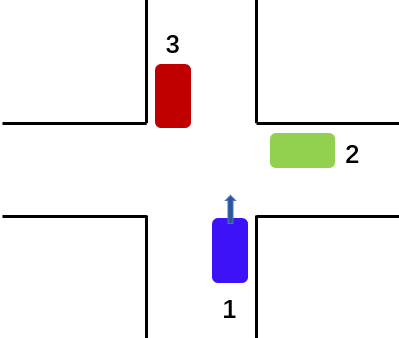}}
\subfigure[]{\label{rule_2}
\includegraphics[width=0.23\textwidth]{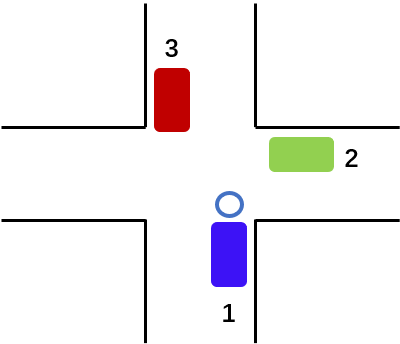}}
\caption{With rule-based payoff only,
 the ego decides to go  when (a) it comes earlier than vehicle 2, and (b) it comes later than vehicle 2.}\label{both_12}
\end{figure}
\begin{figure}[thpb]
\centering
\subfigure[]{\label{rule_3}
\includegraphics[width=0.23\textwidth]{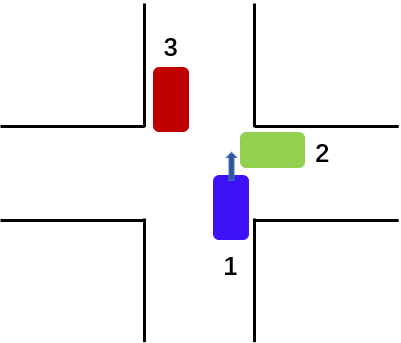}}
\subfigure[]{\label{both_2}
\includegraphics[width=0.23\textwidth]{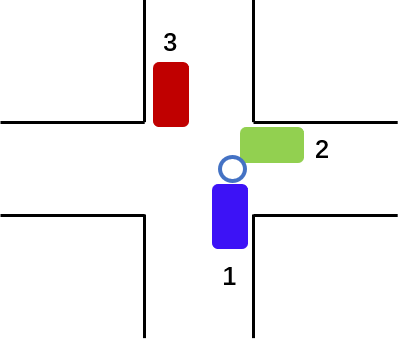}}
\caption{
(a) The ego crashes into vehicle 2 when vehicle 2 violates the traffic rule and the ego's payoff contains the rule-based component only. (b) The ego successfully stops before crashing into the vehicle 2, with both safety and rule -based components. }\label{both_34}
\end{figure}

 \textbf{Study 2: Interaction between the ego and its second level neighbors.}
  In this study, we test how the   second level neighbors affect the ego's decision. Note that the ego does not have trajectory conflicts with its second level neighbors. The results are shown in Figure \ref{study2}. In Figure \ref{study2_1}, the vehicle 3 (i.e., ego's second level neighbor) yields to vehicle 2, and the ego decides to yield as it arrives later than vehicle 2. In Figure \ref{study2_2}, the vehicle 3 decides to go as it comes earlier than vehicle 2, which also motivates the ego to go because it knows that vehicle 2 should yield to vehicle 3 and thus makes itself safe to go.  This simulation shows that  the second level neighbors, although do not have direct trajectory conflicts with the ego, affects the ego's decision. To be more specific, by taking into account the second level neighbors,  the ego's decision is expected to be more efficient  with less unnecessary waiting.   
\begin{figure}[thpb]
\centering
\subfigure[]{\label{study2_1}
\includegraphics[width=0.23\textwidth]{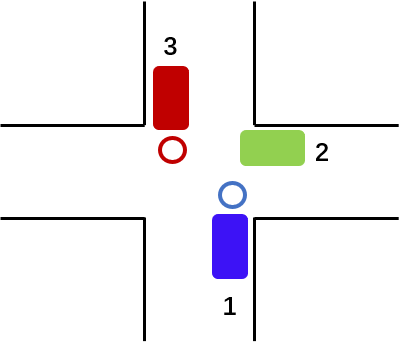}}
\subfigure[]{\label{study2_2}
\includegraphics[width=0.23\textwidth]{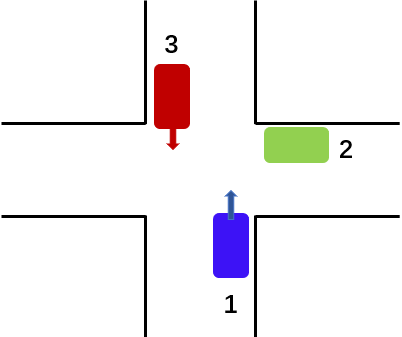}}
\caption{The ego's decisions in Study 2. 
(a) The ego decides to yield when vehicle 3 yields and  vehicle 2 comes earlier. (b) The ego decides to go when vehicle $3$ decides to go. }\label{study2}
\end{figure}

 \textbf{Study 3: Comparison between the hierarchical game and the pairwise game.}
  In this study, we compare the ego's performance when its decisions are generated by the hierarchical game and the pairwise games, respectively. We use the same scenario as shown in  Figure \ref{study2_2}. The hierarchical game generates  the decision of "go" for the ego, as shown in  Figure \ref{study2_2}. However, the pairwise games generate the decision of "yield", as shown in Figure \ref{compare}.  In pairwise games, the ego plays a $2$-player  game with vehicle 2 and 3, respectively, and selects the most conservative action from the two game outcomes. The game with vehicle 2 generates a "yield" decision for the ego, as the ego comes later than vehicle 2, and as such, the ego selects "yield" as its final decision.  From this simulation we can see that the pairwise games only capture local interactions between a pair of agents, and can lead to more conservative decisions compared to the hierarchical game.      

\begin{figure}[thpb]
\centering
\subfigure[]{\label{compare}
\includegraphics[width=0.3\textwidth]{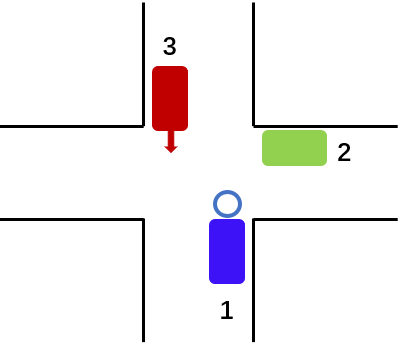}}
\caption{The ego's decision with pairwise games. }
\end{figure}

\subsection{Many-vehicle scenarios}
In this subsection, we test the ego's performance when many vehicles ($N\geq6$) are involved. We conduct the following two studies: 1) the test of the ego's response to surrounding agents' abrupt change of behaviors; and 2) the test of the ego' performances in various  scenarios.  

 \textbf{Study 1: Ego's response to surrounding agents' abrupt change of behaviors.} In this study, we simulate 6-vehicle scenarios, as shown in Figure \ref{study4}. The red vehicle represents the ego, the green ones are the ego's first level neighbors before agent clustering, and the blue one is the ego's second level of neighbor. The white arrows  represent the traffic directions. In the scenario shown in Figure \ref{study4_1}, all surrounding agents move smoothly, and the ego decides to cross the intersection in front of the rightmost  vehicle. In scenario shown in Figure \ref{study4_2},  the  rightmost vehicle speeds up suddenly before the ego crosses it. In this case, the ego is able to successfully stop and yield to the green vehicle as shown in  \ref{study4_2}.   
\begin{figure}[thpb]
\centering
\subfigure[]{\label{study4_1}
\includegraphics[width=0.23\textwidth]{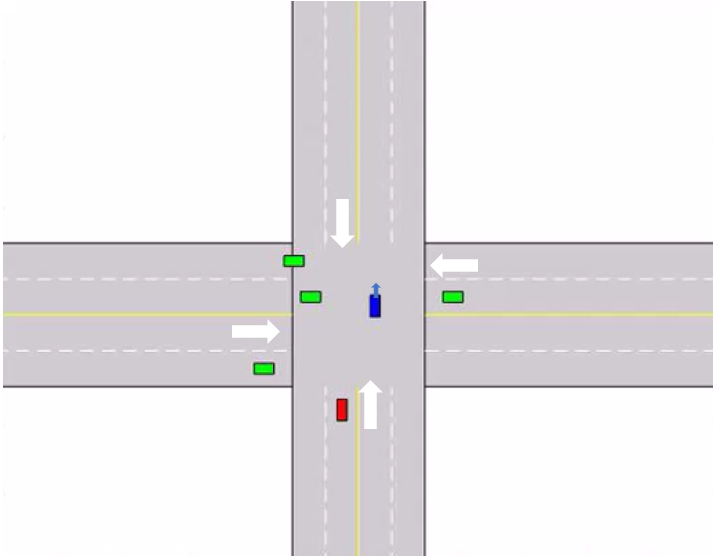}}
\subfigure[]{\label{study4_2}
\includegraphics[width=0.23\textwidth]{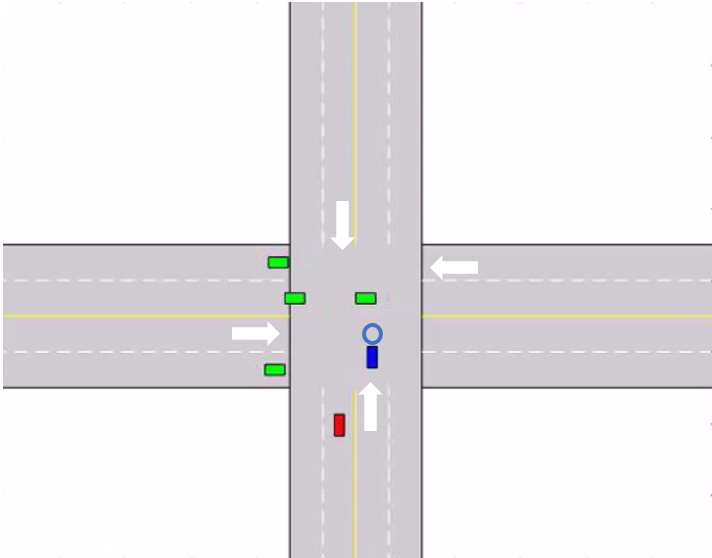}}
\caption{The ego's decision in Study 1. 
(a) The ego crosses the intersection in front of the rightmost vehicle. (b) The ego  yields to respond to the rightmost vehicle's abrupt change of behavior. }\label{study4}
\end{figure}

 \textbf{Study 2: Ego's performance in various complex scenarios.}
 In this study, we test the ego's performance in various intersection-crossing scenarios, when 9-10 vehicles are involved (as shown in Figure \ref{study5}).  In all these tested scenarios, the ego is able to generate both safe and efficient decisions within a short time period ($<10$ milliseconds), which verifies the effectiveness of the developed hierarchical game algorithm.
\begin{figure}[thpb]
\centering
\subfigure[]{\label{study5_1}
\includegraphics[width=0.23\textwidth]{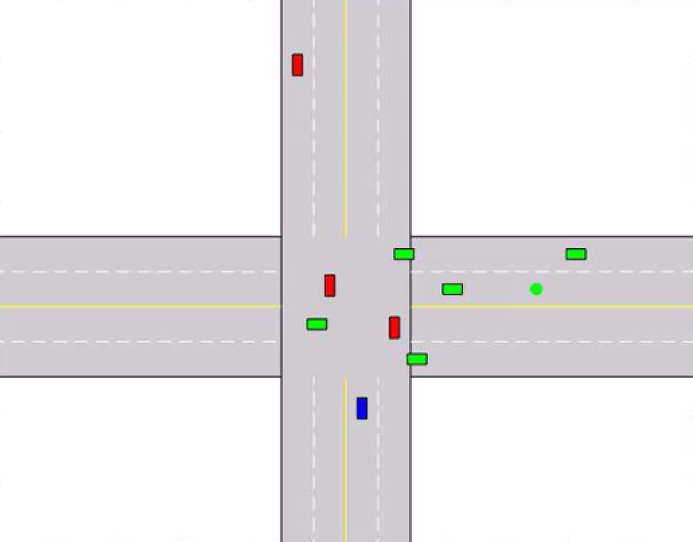}}
\subfigure[]{\label{study5_2}
\includegraphics[width=0.23\textwidth]{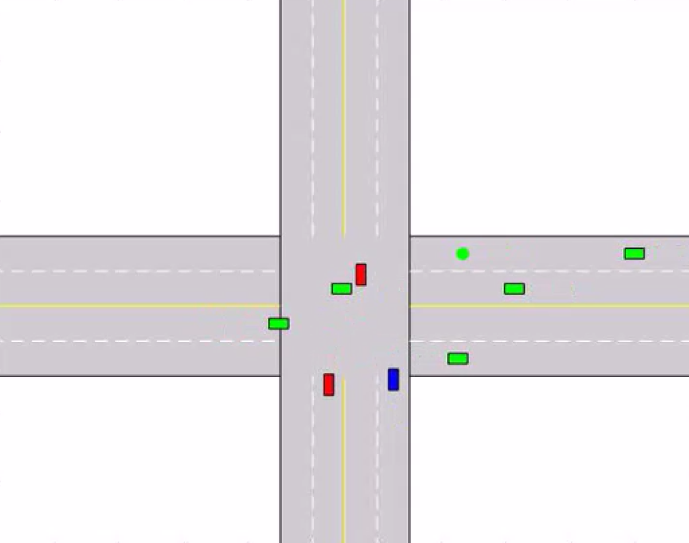}}
\caption{The ego performs well in various scenarios.}\label{study5}
\end{figure}

\section{Conclusions}\label{VI}
This paper proposes a novel hierarchical game to solve multi-agent game-theoretic decision-making in autonomous driving. This hierarchical game features an interaction graph, which characterizes the interaction relationships between the ego and its surrounding agents.  With this hierarchical game, the ego is able to smartly select a limited number of agents as its game players, which significantly reduces the computational complexity compared to the multi-player game where all surrounding agents are included in the set of game players. Compared to the pairwise game, the hierarchical game captures more interaction information, and as such, lead to   more efficient decisions for the ego while guarantees safety. In the future, we will incorporate this hierarchical game framework with trajectory prediction techniques to capture the driving intentions of surrounding agents and facilitate the ego's decision making under uncertain intentions.

\bibliography{main}{}
\bibliographystyle{IEEEtran}
\begin{IEEEbiography}[{\includegraphics[width=1in,height=1.25in,clip,keepaspectratio]{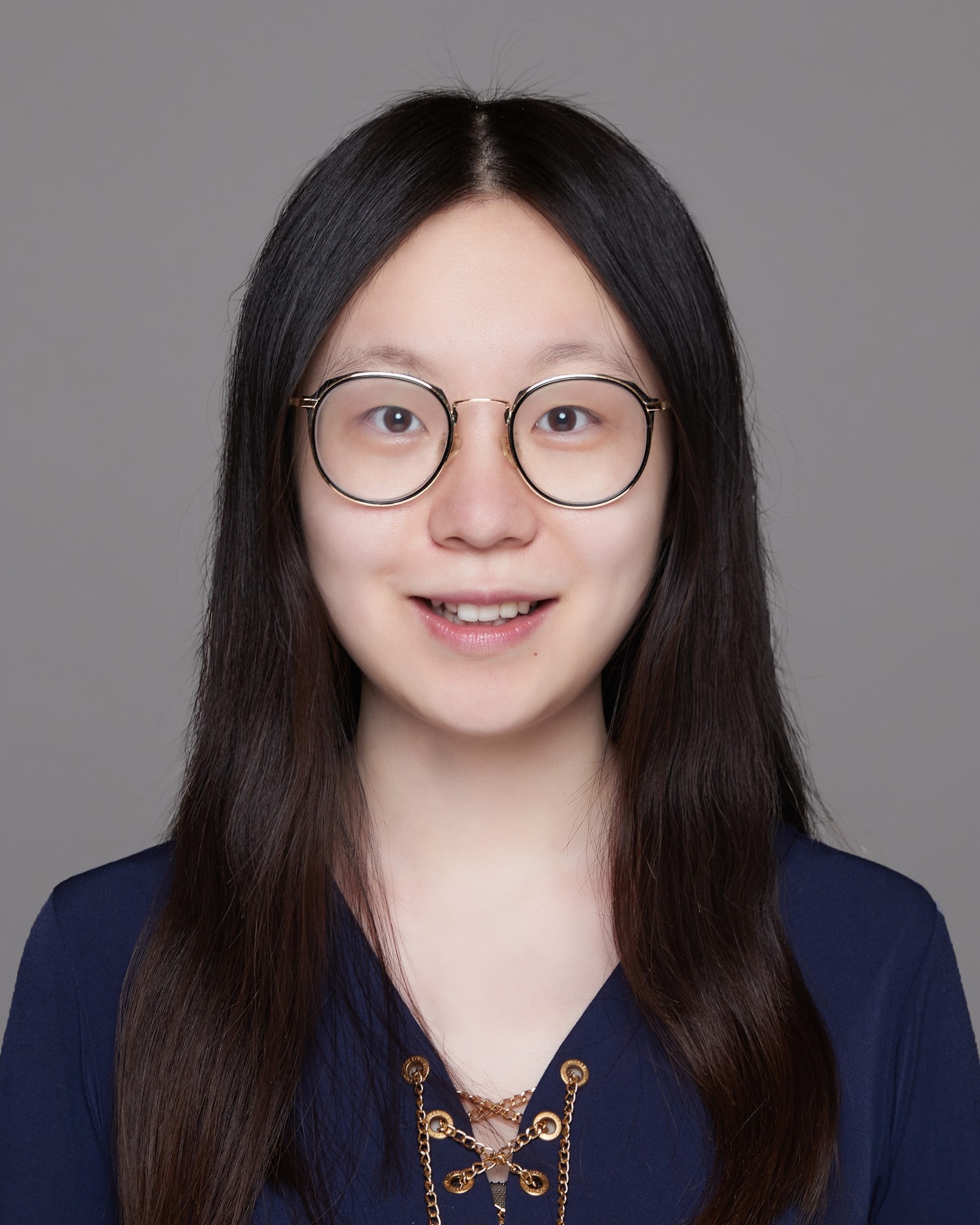}}]{Mushuang Liu} is currently a Postdoctoral Research Fellow in the Aerospace Engineering at the University of Michigan, Ann Arbor, Michigan. She was an Adjunct Professor and Postdoctoral Research Associate in  Electrical Engineering at the University of Texas at Arlington (UTA). She received her B.S. degree in Electrical Engineering from University of Electronic Science and Technology of China, Chengdu, China in 2016, and her Ph.D degree in Electrical Engineering from UTA, TX, USA, in 2020. Her research interests include decision-making in multi-agent systems, optimal control, distributed control, multi-player games, reinforcement learning, and their applications to UAV traffic management, UAV networking, and autonomous driving. Her research has led to over 20 publications and multiple awards, including Lockheed Martin Missiles and Fire Control Graduate Fellowship,  Research in Motion Scholarships, and Office of Graduate Studies Travel Grant Award.
\end{IEEEbiography}

\begin{IEEEbiography}[{\includegraphics[width=1in,height=1.25in,clip,keepaspectratio]{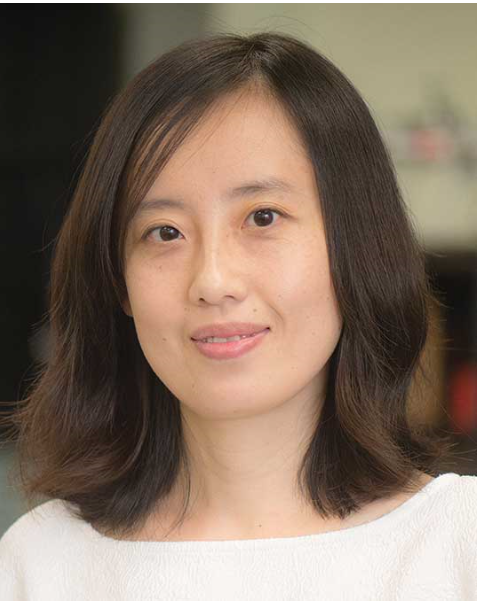}}]{Yan Wan} is currently a Professor in the Electrical Engineering Department at the University of Texas at Arlington. She received her Ph.D. degree in Electrical Engineering from Washington State University in 2008 and then did postdoctoral training at the University of California, Santa Barbara. Her research interests lie in the modeling, evaluation, and control of large-scale dynamical networks, cyber-physical systems, stochastic networks, decentralized control, learning control, networking, uncertainty analysis, algebraic graph theory, and their applications to urban aerial mobility, autonomous driving, robot networking, and air traffic management. Her research has led to over 190 publications and successful technology transfer outcomes. She has received prestigious awards, including the NSF CAREER Award, RTCA William E. Jackson Award, U.S. Ignite and GENI demonstration awards, IEEE WCNC and ICCA Best Paper Award, and Tech Titan of the Future-University Level Award.
\end{IEEEbiography}

\begin{IEEEbiography}[{\includegraphics[width=1in,height=1.25in,clip,keepaspectratio]{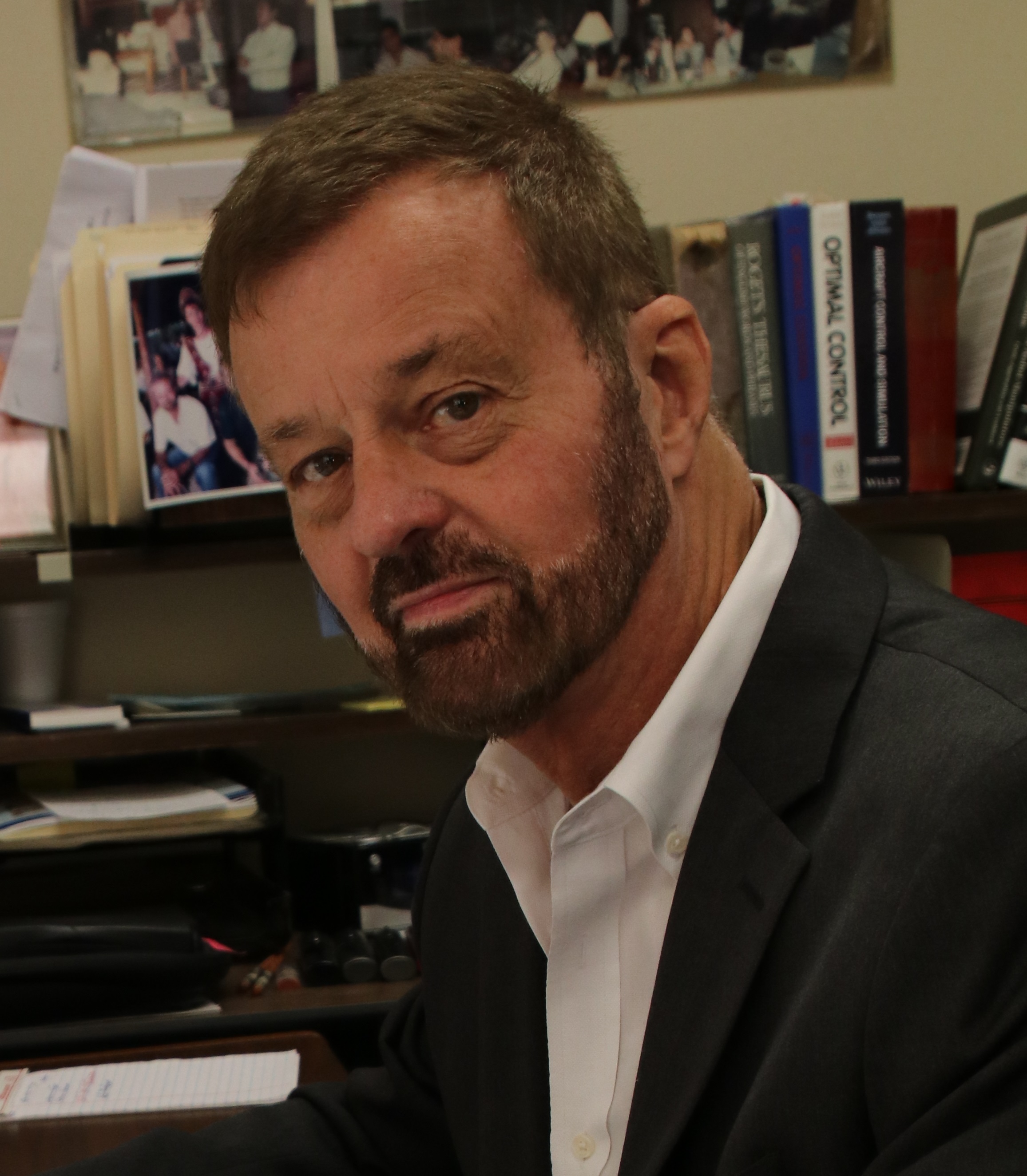}}]{Frank L. Lewis} obtained the Bachelor's Degree in Physics/EE and the MSEE at Rice University, the MS in Aeronautical Engineering from Univ. W. Florida, and the Ph.D. at Ga. Tech.  Fellow, National Academy of Inventors. Fellow IEEE, Fellow IFAC, Fellow AAAS, Fellow U.K. Institute of Measurement \& Control, PE Texas, U.K. Chartered Engineer. UTA Charter Distinguished Scholar Professor, UTA Distinguished Teaching Professor, and Moncrief-O’Donnell Chair at the University of Texas at Arlington Research Institute.  Ranked at position 88 worldwide, 66 in the USA, and 3 in Texas of all scientists in Computer Science and Electronics, by Guide2Research.com. 65,000 Google Scholar Citations. He works in feedback control, intelligent systems, reinforcement learning, cooperative control systems, and nonlinear systems.  He is author of 7 U.S. patents, numerous journal special issues, 445 journal papers, 20 books, including the textbooks Optimal Control, Aircraft Control, Optimal Estimation, and Robot Manipulator Control. He received the Fulbright Research Award, NSF Research Initiation Grant, ASEE Terman Award, Int. Neural Network Soc. Gabor Award, U.K. Inst Measurement \& Control Honeywell Field Engineering Medal, IEEE Computational Intelligence Society Neural Networks Pioneer Award, AIAA Intelligent Systems Award, AACC Ragazzini Award.  He has received over $\$$12M in 100 research grants from NSF, ARO, ONR, AFOSR, DARPA, and USA industry contracts.  Helped win the US SBA Tibbets Award in 1996 as Director of the UTA Research Institute SBIR Program.  
\end{IEEEbiography}
\begin{IEEEbiography}[{\includegraphics[width=1in,height=1.25in,clip,keepaspectratio]{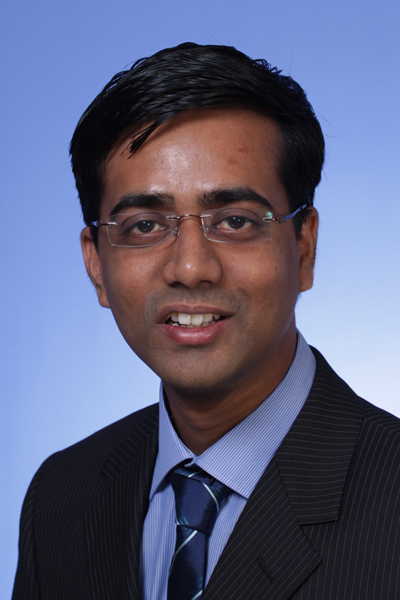}}]{Subramanya Nageshrao}  received his Ph.D. degree from the Delft Center for Systems and Control, Delft University of Technology, Delft, The Netherlands, in 2016. He is currently a Research Engineer at Ford Green Field Labs, Palo Alto, California. Prior to that he worked as a post-doctoral research scholar at University of Michigan Ann-Arbor. He is actively leveraging machine learning methods particularly reinforcement learning techniques to build next generation state-of-the-art advanced driver assistance technologies.
\end{IEEEbiography}
\begin{IEEEbiography}[{\includegraphics[width=1in,height=1.25in,clip,keepaspectratio]{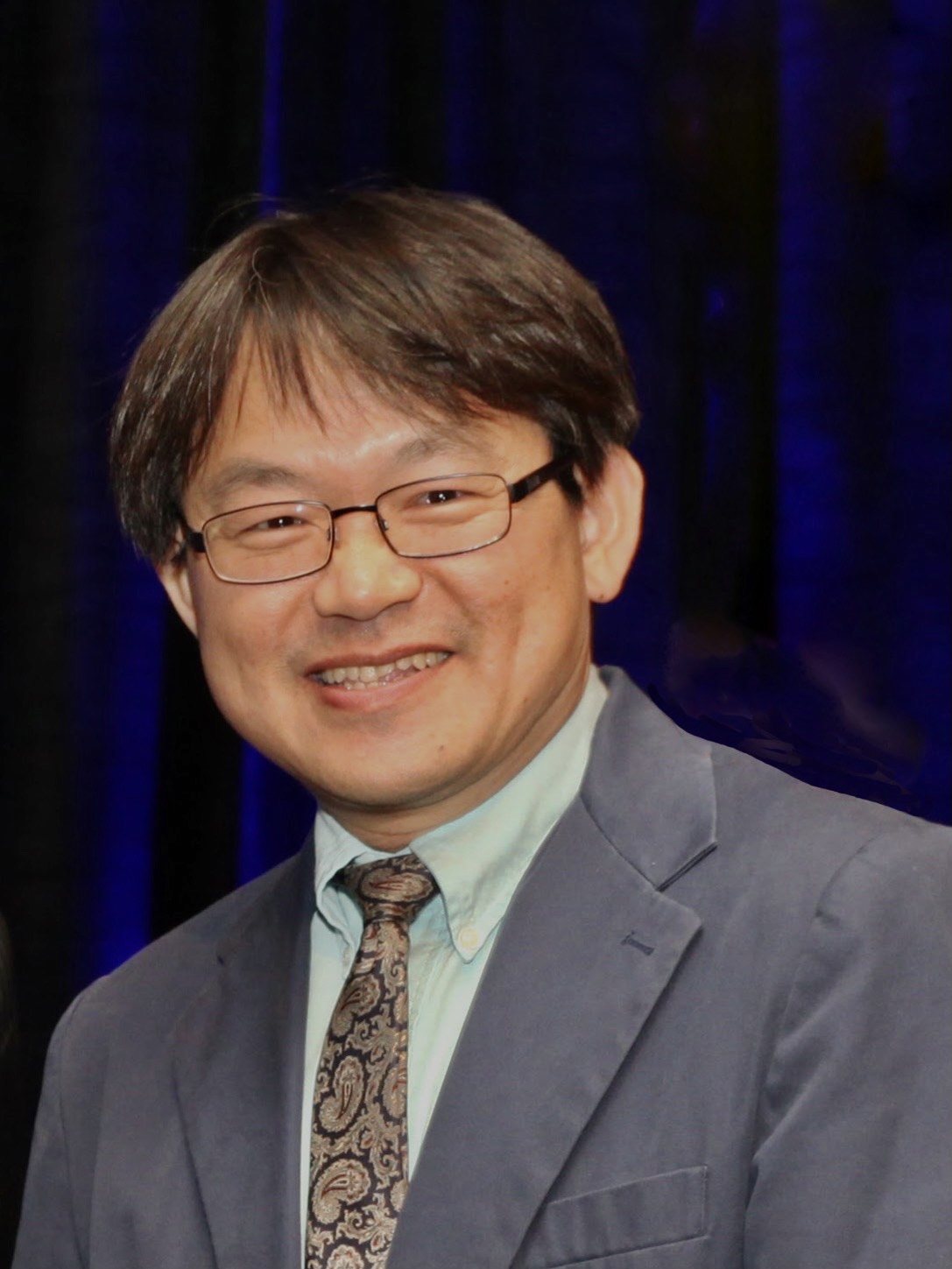}}]{Hongtei Eric Tseng} received his B.S. degree from National Taiwan University, Taipei, Taiwan in 1986.  He received his M.S. and Ph.D. degrees from the University of California, Berkeley in 1991 and 1994, respectively, all in Mechanical Engineering.  In 1994, he joined Ford Motor Company. 

At Ford, he is currently a Senior Technical Leader of Controls and Automated Systems in Research and Advanced Engineering.  Many of his contributed technologies led to production vehicles implementation. His technical achievements have been recognized internally seven times with Ford’s highest technical award - the Henry Ford Technology Award, as well as externally by the American Automatic Control Council with Control Engineering Practice Award in 2013.  Eric has over 100 US patents and over 120 publications. He is an NAE Member.
\end{IEEEbiography}
\begin{IEEEbiography}[{\includegraphics[width=1in,height=1.5in,clip,keepaspectratio]{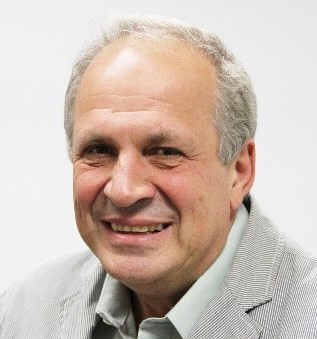}}]{Dimitar Filev} is Senior Henry Ford Technical Fellow in Control and AI with Research $\&$ Advanced Engineering – Ford Motor Company. His research is in computational intelligence, AI and intelligent control, and their applications to autonomous driving, vehicle systems, and automotive engineering.  He holds over 100 granted US patents and has been awarded with the IEEE SMCS 2008 Norbert Wiener Award and the 2015 Computational Intelligence Pioneer’s Award. Dr. Filev is a Fellow of the IEEE and a member of the National Academy of Engineering. He was President of the IEEE Systems, Man, \& Cybernetics Society (2016-2017).
\end{IEEEbiography}
\end{document}